\def\BibTeX{{\rm B\kern-.05em{\sc i\kern-.025em b}\kern-.08em
    T\kern-.1667em\lower.7ex\hbox{E}\kern-.125emX}}
\newtheorem{assumption}{Assumption}
\newtheorem{theorem}{Theorem}
\newtheorem{definition}{Definition}
\newtheorem{remark}{Remark}
\newtheorem{lemma}{Lemma}
\newtheorem{problem}{Problem}
\DeclareMathOperator{\DIAG}{diag}
\DeclareMathOperator{\VEC}{vec}
\DeclareMathOperator{\ROW}{row}
\newcommand{\NF}[1]{\| #1\|_F}  
\DeclareMathOperator{\VAR}{var}
\newcommand{\ID}[1]{ \mathbb{1} (#1 ) }
\newcommand{\Exp}[1]{\mathbb{E}\big[ #1\big]}
\newcommand{\Compress}{\medmuskip=0mu
\thinmuskip=0mu
\thickmuskip=0mu}
\title{\LARGE \bf
Reinforcement Learning in Linear Quadratic Deep Structured Teams:  Global Convergence of Policy Gradient  Methods
}
\author{Vida Fathi, Jalal Arabneydi  and Amir G. Aghdam
\thanks{This work has been supported by the Natural Sciences and Engineering Research Council of Canada (NSERC) under Grant RGPIN-262127-17.}  
\thanks{ Vida Fathi, Jalal Arabneydi and Amir G. Aghdam are with the  Department of Electrical and Computer Engineering, 
        Concordia University, 1455 de Maisonneuve Blvd. West, Montreal, QC, Canada, Postal Code: H3G 1M8.  Email:  {\tt\small v\_fathi@encs.concordia.ca}, {\tt\small jalal.arabneydi@mail.mcgill.ca},   
        {\tt\small aghdam@ece.concordia.ca}}%
}
\begin{document}
\maketitle

\vspace*{-5cm}{\footnotesize{Proceedings of IEEE  Conference on Decision and Control, 2020.}}
\vspace*{4.2cm}

\thispagestyle{empty}
\pagestyle{empty}
\begin{abstract}
In this paper, we  study the global convergence of model-based and  model-free  policy  gradient descent and natural  policy gradient descent algorithms for linear quadratic deep structured teams. In  such systems,    agents are partitioned into a few  sub-populations wherein the agents in each sub-population are  coupled in the dynamics and cost function  through a  set  of  linear regressions of the states and actions of all agents. Every agent observes its local state and the linear regressions of states, called deep states.  For a sufficiently small  risk factor and/or sufficiently large population, we prove  that model-based policy gradient methods globally  converge to the  optimal solution.  Given an arbitrary  number of agents, we develop model-free policy gradient and natural policy gradient  algorithms  for the special case of risk-neutral cost function.  The proposed algorithms are scalable with respect to the number of agents due to the fact that the dimension of  their policy  space is independent of the number of agents in each sub-population. Simulations are  provided to verify the theoretical results. 
\end{abstract}


\section{Introduction}
In today's world, networked control systems are ubiquitous, ranging from smart grids and economics to   communication networks and epidemics. Such systems often consist of  many  decision makers (nodes)  with complex interactions.  In general,  finding an optimal (or even sub-optimal) solution  in  networked control systems  is  difficult. This difficulty exacerbates when  practical restrictions are  taken into account such as    limited  number of  computation and communication  resources  and  incomplete  knowledge of the model.

It is well known that the number of computational elements (such as memory and time) increases exponentially with the number of decision makers in stochastic dynamic control systems. In addition,  the  lack of  centralized  communication among the decision makers can  lead to different perspectives at the agent level,  where solving  a simple linear quadratic problem is challenging; see~\cite{Witsenhausen1968Counterexample} for  a counterexample in which  the resultant optimization problem is non-convex.  Furthermore,   the above  challenges are worsened when the underlying  model is not completely known. Therefore, it is of special interest in control theory  to find a class of  models in which the above challenges  can be addressed to some extent.

Motivated from  recent  developments  in artificial intelligence,  deep structured teams and games have been introduced in~\cite{Jalal2019MFT,Jalal2019Automatica,Jalal2019risk,Jalal2020Nash,Jalal2020CCTA,Masoud2020CDC}, 
 which may be viewed as the generalization of mean-field teams proposed  in~\cite{arabneydi2016new} and showcased in~\cite{Jalal2017linear,JalalCDC2015,JalalCDC2018,
 Jalal2019LCSS,JalalCCECE2018,JalalACC2018}.  In such systems, the interaction between the decision makers  is modelled by a set of   linear regressions (weighted averages) of the states and actions,  where the weights represent  the dominant features of the model. We call such models deep structured  because  the interaction between the  decision makers  is similar to  that  between the neurons of a deep feed-forward  neural network.
 
In this paper, we  study  the global convergence of the model-based and model-free  policy gradient descent and natural policy gradient descent algorithms.  Since the convexity in action space does not imply the convexity in policy space~\cite{fazel2018global}, we use the notions of gradient domination and locally Lipschitz continuity to show that   the policy gradient descent and natural policy gradient descent algorithms converge to the globally optimal solution. 
 In~\cite{Luo2019} and~\cite{lu2020decentralized}, the authors propose several policy gradient (reinforcement learning) algorithms for the special  case of homogeneous weights with the  risk-neutral cost function.  In  this paper, however, we consider a more general setup with risk-sensitive cost function and deep structured model consisting of multiple features and heterogeneous weights.

The remainder of the paper is organized as follows. The problem is formulated in Section~\ref{sec:problem} and the main  results are presented in Sections~\ref{sec:main} and~\ref{sec:main2}. Three types of implementation are discussed in Section~\ref{sec:implement}.
To verify the obtained results, some simulations are provided in Section~\ref{sec:numerical}. Finally, the paper is concluded in Section~\ref{sec:conclusion}. 
\section{Problem Formulation}\label{sec:problem}
In this article,  $\ID{\boldsymbol \cdot}$ is the indicator function,  $\rho(\boldsymbol \cdot)$ is the spectral radius of a matrix,    $\DIAG(\boldsymbol \cdot)$ is a block diagonal matrix, $\NF{\boldsymbol \cdot}$ is the Frobenius norm of a matrix, and  $\VAR(\boldsymbol \cdot)$ is the variance of a random variable.  For any $n \in \mathbb{N}$, $x_{1:n}$ is the vector $(x_1,\ldots,x_n)$ and  $\mathbb{N}_n$ is  the finite set $\{1,\ldots,n\}$. For any vectors $x,y$ and $z$,  $\VEC(x,y,z)=[x^\intercal, y^\intercal,z^\intercal]^\intercal$ and  for any matrices $A,B$ and $C$ with the same number of columns, $\ROW(A,B,C)=[A^\intercal,B^\intercal,C^\intercal ]^\intercal$. For any square matrix $A$, $A \geq 0$ and $A>0$  mean that matrix $A$  is positive semi-definite and positive definite, respectively.  Also, $\mathbf{I}$ refers to the identity matrix  and $\mathbf 0$  to a  matrix with zero arrays.
\subsection{Model}
Consider a decentralized stochastic control  system with $n \in \mathbb{N}$ decision makers (agents).   The agents are partitioned into $S  \in \mathbb{N}$ disjoint sub-populations (sub-systems) with $n(s)  \leq n$ agents, where $\sum_{s=1}^S n(s)=n$. For any sub-population $s \in \mathbb{N}_S$, let $x^i_t \in \mathbb{R}^{d^s_x}$, $u^i_t \in \mathbb{R}^{d^s_u}$ and $w^i_t \in \mathbb{R}^{d^s_x}$ denote  the state, action and noise of agent $i  \in \mathbb{N}_{n(s)}$ at time $t \in \mathbb{N}$, respectively. In addition, let $\alpha^{i,j}(s) \in \mathbb{R}$ denote the \emph{influence factor} of agent $i  \in \mathbb{N}_{n(s)}$  on the $j$-th feature of the sub-population $s \in \mathbb{N}_S$, $j \in \mathbb{N}_{f(s)}$, $f(s) \in \mathbb{N}$.   The influence factors are  orthogonal vectors in  the feature space  such that
\begin{equation}\label{eq:orthogonal_features}
\frac{1}{n(s)} \sum_{i=1}^{n(s)} \alpha^{i,j}(s) \alpha^{i,j'}(s)=\ID{j=j'}, \quad j,j' \in \mathbb{N}_{f(s)}.
\end{equation}
For any feature $j \in \mathbb{N}_{f(s)}$ of sub-population $s \in \mathbb{N}_S$, define the following linear regressions:
\begin{equation}
\bar x_t^j(s)=\frac{1}{n(s)}\sum_{i=1}^{n(s)} \alpha^{i,j}(s) x^i_t,\quad \bar u_t^j(s)=\frac{1}{n(s)}\sum_{i=1}^{n(s)} \alpha^{i,j}(s) u^i_t.
\end{equation} 
From~\cite{Jalal2019MFT,Jalal2019Automatica,Jalal2019risk,Jalal2020Nash,Jalal2020CCTA},  we refer to the above linear regressions as \emph{deep states} and \emph{deep actions}.  At any time $t \in \mathbb{N}$, define $\bar{\mathbf x}_t:=\VEC((\bar x_t^j(s))_{j=1}^{f(s)})_{s=1}^S$ and $\bar{\mathbf u}_t:=\VEC((\bar u_t^j(s))_{j=1}^{f(s)})_{s=1}^S$. Let the initial states of the agents of each sub-population $s \in \mathbb{N}_S$ be independent and identically distributed (i.i.d.)  Gaussian random vectors with positive covariance matrix  $\Sigma_x(s)$. The state evolution  of agent $i \in \mathbb{N}_{n(s)}$ in sub-population $s \in \mathbb{N}_S$  is described   by:
\begin{equation}\label{eq:dynamics_original}
\Compress
x^i_{t+1}=A(s) x^i_t+B(s) u^i_t+  \sum_{j=1}^{f(s)} \alpha^{i,j}(s)(\bar A^j(s) \bar{\mathbf x}_t + \bar  B^j(s) \bar{\mathbf u}_t )+ w^i_t,
\end{equation}
where $\{w^i_t\}_{t=1}^\infty$ is an i.i.d. zero-mean Gaussian random vector with positive covariance  matrix $\Sigma_w(s)$. The per-step cost of agent $i \in \mathbb{N}_{n(s)}$ of sub-population $s \in \mathbb{N}_S$ is defined as:
\begin{equation}\label{eq:cost_original}
c^i_t= (x^i_t)^\intercal Q(s) x^i_t+(u^i_t)^\intercal R(s) u^i_t+ \bar{\mathbf x}_t^\intercal \bar Q(s) \bar{\mathbf x}_t+ \bar{\mathbf u}_t^\intercal \bar R(s) \bar{\mathbf u}_t,
\end{equation}
where $Q(s)$, $R(s)$, $\bar Q(s)$ and $\bar R(s)$ are symmetric matrices with appropriate dimensions.  The team (social welfare) cost function at time $t \in \mathbb{N}$ is  given by:
\begin{equation}\label{eq:bar_c_per}
\bar c_t= \sum_{s=1}^S \frac{\mu(s)}{n(s)} \sum_{i=1}^{n(s)} c^i_t,
\end{equation}
where $\mu(s) >0$  determines  the importance of the cost of  agents of sub-population $s \in \mathbb{N}_S$  with respect to other sub-populations.  It is assumed that the primitive random vectors $\{ \{ \{x^i_1\}_{i=1}^{n(s)}\}_{s=1}^{S},   \{ \{w^i_1\}_{i=1}^{n(s)}\}_{s=1}^{S},  \{ \{w^i_2\}_{i=1}^{n(s)}\}_{s=1}^{S}, \ldots\}$~are defined on a common probability space, and are mutually independent across time and space. 

\begin{definition}[Weakly coupled agents~\cite{Jalal2019risk}]\label{def:weak}
\emph{The agents are said to be weakly coupled in the  dynamics if the coupling term in~\eqref{eq:dynamics_original} can be expressed as:
$\sum_{j=1}^{f(s)} \alpha^{i,j}(s)(\bar A^j(s) \bar x^j_t(s) + \bar B^j(s) \bar u^j_t(s))$.
Similarly, the agents are said to be weakly coupled in the  cost function if the coupling term in~\eqref{eq:cost_original} can be represented as:
$\sum_{s=1}^{S} \mu(s)\sum_{j=1}^{f(s)} (\bar x^j_t(s))^\intercal  \bar Q^j(s) \bar x^j_t(s)+ (\bar u^j_t(s))^\intercal  \bar R^j(s) \bar u^j_t(s)$.
Weakly coupling often arises in natural systems with equivariant structure.}
\end{definition}

The information structure considered here is called \emph{deep-state sharing} (DSS), where each agent $i \in \mathbb{N}_{n(s)}$ of sub-population $s \in \mathbb{N}_S$ observes its local state as well as the deep states, i.e., $
u^i_t=g^i_t(x^i_{1:t}, \bar{\mathbf x}_{1:t})$,
where $g^i_t$ is the control law at time $t \in \mathbb{N}$. Notice that DSS is a non-classical information structure wherein each agent has a different information set.  
\subsection{Problem statement}
 Given any \emph{risk factor}  $\lambda >0$, define the following objective function:
$J_{n,\lambda}:= \limsup_{T \rightarrow \infty} \frac{1}{\lambda T} \log \Exp{ e^{\lambda \sum_{t=1}^T \bar c_t}}$.
Note that for a small risk factor  $\lambda$, one has:
\begin{equation}\label{eq:var_relationship}
J_{n,\lambda} \approx \Exp{\limsup_{T \rightarrow \infty}\frac{1}{T}\sum_{t=1}^T  \bar c_t}+ \frac{\lambda}{2}
\VAR(\limsup_{T \rightarrow \infty}\frac{1}{T}\sum_{t=1}^T  \bar c_t).
\end{equation}
From~\eqref{eq:var_relationship}, it implies that  risk-factor $\lambda$  balances the trade off between optimality (where $\lambda \rightarrow 0$) and robustness (where robustness is defined in terms of minimum variance). To have a well-posed problem, it is assumed that all matrices defined above are uniformly bounded  in time and space, and that the set of admissible  actions are square integrable.  Let $\mathbf g=:\{\{\{ g^i_t\}_{i=1}^{n(s)}\}_{s=1}^S\}_{t=1}^\infty$ denote the strategy of all agents.
\begin{problem}\label{prob1}
Develop  model-based  gradient descent and natural policy gradient descent algorithms to compute  the optimal risk-sensitive strategy $\mathbf g^\ast$ such that for any  strategy $\mathbf g$, the following inequality holds:
$J_{n,\lambda}(\mathbf g^\ast) \leq J_{n,\lambda}(\mathbf g)$.
\end{problem}

\begin{problem}\label{prob2}
Develop   model-free gradient descent and natural policy gradient descent algorithms to learn  the optimal risk-neutral strategies  $\mathbf g^\ast$, i.e., when $\lambda \rightarrow 0$.
\end{problem}

\begin{remark}
 \emph{For  the special case of single sub-population (i.e. $S=1$), single agent (i.e. $n(s)=1$, $s \in \mathbb{N}_S$) and single feature (i.e. $f(s)=1$, $s\in \mathbb{N}_S$),  deep structured teams   reduce to the classical   single-agent control problems~\cite{jacobson1973optimal,whittle1981risk,bacsar2008h}.}
\end{remark}
%
\section{Main Results for Problem~\ref{prob1}}\label{sec:main}
In this section, we first present the solution of Problem~1  in terms of Riccati equations. Then, we  establish the global convergence of model-based  policy gradient algorithms.
 From~\cite{Jalal2019risk},  we define a gauge transformation for any agent $i \in \mathbb{N}_{n(s)}$ in sub-population $s \in \mathbb{N}_S$ at time $t \in \mathbb{N}$:
\begin{align}\label{eq:Gauge}
\begin{cases}
\Delta x^i_t:=x^i_t -\sum_{j=1}^{f(s)} \alpha^{i,j}(s) \bar x^j_{t}(s),\\
\Delta u^i_t:=u^i_t -\sum_{j=1}^{f(s)} \alpha^{i,j}(s) \bar u^j_{t}(s),\\
\Delta w^i_t:=w^i_t -\sum_{j=1}^{f(s)} \alpha^{i,j}(s) \bar w^j_{t}(s), 
\end{cases}
\end{align}
where $ \bar w^j_t(s):=\frac{1}{n(s)}\sum_{i=1}^{n(s)} \alpha^{i,j}(s) w^i_t$, $\forall j \in \mathbb{N}_{f(s)}$.  The gauge transformation induces the following linear dependences:
\begin{equation}\label{eq:linear_dependence}
\begin{cases}
\sum_{i=1}^{n(s)} \sum_{j=1}^{f(s)} \alpha^{i,j}(s) \Delta x^i_t=\mathbf 0,\\ \sum_{i=1}^{n(s)} \sum_{j=1}^{f(s)} \alpha^{i,j}(s) \Delta u^i_t=\mathbf 0,\\ \sum_{i=1}^{n(s)} \sum_{j=1}^{f(s)} \alpha^{i,j}(s) \Delta w^i_t=\mathbf 0.
\end{cases}
\end{equation}
Subsequently, the dynamics of the  $j$-th deep state of sub-population~$s \in \mathbb{N}_S$ can be represented as follows:
\begin{equation}
\Compress
\bar x^j_{t+1}(s)=A(s) \bar x^j_t(s) + B(s) \bar u^j_t(s) +\bar A^j(s) \bar{ \mathbf x}_t + \bar B^j(s) \bar{\mathbf u}_t+ \bar w^j_t(s).  
\end{equation}

For any $s \in \mathbb{N}_S$ and  $t \in \mathbb{N}$, define the following matrices:
\begin{equation}
\begin{cases}
\tilde {\mathbf A}_t(s):=\DIAG(A_t(s))_{f(s)},\quad 
\tilde {\mathbf B}_t(s):=\DIAG(B_t(s))_{f(s)},\\
\bar {\mathbf A}_t(s):=[\mathbf 0_{f(s)d^s_x\times f(1)d^1_x},\ldots,\tilde{\mathbf A}_t(s),\ldots,\mathbf 0_{f(s)d^s_x\times f(S)d^S_x}]\\
\quad +\ROW(\bar A^1_t(s),\ldots,\bar A^{f(s)}_t(s)),\\
\bar {\mathbf B}_t(s):=[\mathbf 0_{f(s)d^s_u\times f(1)d^1_u},\ldots,\tilde{\mathbf B}_t(s),\ldots,\mathbf 0_{f(s)d^s_u\times f(S)d^S_u}]\\
\quad +\ROW(\bar B^1_t(s),\ldots,\bar B^{f(s)}_t(s)),\\
\bar{\mathbf A}_t  :=\ROW(\bar{\mathbf A}_t(1),\ldots, \bar{\mathbf A}_t(S)), \\
\bar{\mathbf B}_t:=\ROW(\bar{\mathbf B}_t(1),\ldots, \bar{\mathbf B}_t(S)).
\end{cases}
\end{equation}
One can then write:
$\bar{\mathbf x}_{t+1}=\bar{\mathbf A} \bar{\mathbf x}_t+\bar{\mathbf B} \bar{\mathbf u}_t+ \bar{\mathbf w}_t$,
where  $\bar{\mathbf w}_t:=\VEC((\bar w_t^j(s))_{j=1}^{f(s)})_{s=1}^S$. From~\eqref{eq:dynamics_original} and~\eqref{eq:Gauge}, it  follows that for any $i \in \mathbb{N}_{n(s)}$ and $s \in \mathbb{N}_S$:
$\Delta x^i_{t+1}=A(s) \Delta x^i_t+ B(s) \Delta u^i_t+\Delta w^i_t$.
In addition, one has the following orthogonal relations:
\begin{equation}\label{eq:orthogonal_relation}
\begin{cases}
\sum_{i=1}^{n(s)} \sum_{j=1}^{f(s)} \alpha^{i,j}(s) (\Delta x^i_t)^\intercal Q(s) \bar x^j_t(s)= 0,\\
\sum_{i=1}^{n(s)} \sum_{j=1}^{f(s)} \alpha^{i,j}(s) (\Delta u^i_t)^\intercal R(s) \bar u^j_t(s)= 0.\\
\end{cases}
\end{equation}
Define the following matrices $\bar {\mathbf Q}_t(s):=\DIAG(Q_t(s))_{f(s)}$, $\bar {\mathbf R}_t(s):= \DIAG(R_t(s))_{f(s)}$, $s \in \mathbb{N}_S$, $t \in \mathbb{N}$, and
\begin{align}
\bar{\mathbf Q}:=\DIAG\big( \mu(s)\DIAG(\underbrace{Q(s),\ldots, Q(s)}_{f(s)}) \big)_{s=1}^S+\sum_{s=1}^S \mu(s) \bar Q(s),\\
\bar{\mathbf R}:=\DIAG\big( \mu(s)\DIAG(\underbrace{R(s),\ldots, R(s)}_{f(s)}) \big)_{s=1}^S+\sum_{s=1}^S \mu(s) \bar R(s).
\end{align}
The cost function~\eqref{eq:bar_c_per} can be reformulated as:
$
\bar c_t=\bar{\mathbf x}_t^\intercal \bar{\mathbf Q} \bar{\mathbf x}_t + \bar{\mathbf u}_t^\intercal \bar{\mathbf R} \bar{\mathbf u}_t +\sum_{s=1}^S \frac{\mu(s)}{n(s)} \sum_{i=1}^{n(s)} (\Delta x^i_t)^\intercal Q(s) \Delta x^i_t + (\Delta u^i_t)^\intercal R(s) \Delta u^i_t. 
$
 Define also the following deep Riccati equation (that consists of $S+1$  decoupled Riccati equations):
\begin{equation}\label{eq:DR_1}
\begin{cases}
P_n(s)=Q(s)+   A^\intercal(s) \tilde P_n(s) A(s) -  A^\intercal(s) \tilde P_n(s) B(s)\\
\qquad \times ( R(s) + B^\intercal(s) \tilde P_n(s) B(s))^{-1} B^\intercal(s) \tilde P_n(s) A(s),\\
\tilde P_n(s)=P_n(s)(\mathbf I_{d^s_x \times d^s_x} - 2 \lambda \frac{\mu(s)}{n(s)} \Sigma_w(s) P_n(s))^{-1},
\end{cases}
\end{equation}
\begin{equation}\label{eq:DR_2}
\begin{cases}
\mathbf P_n=\bar{\mathbf Q}+  \bar{\mathbf A}^\intercal \tilde{\mathbf P}_n \bar{\mathbf A} -  \bar{\mathbf A}^\intercal \tilde{\mathbf  P}_n \bar{\mathbf B} ( \bar{\mathbf R} + \bar{\mathbf B}^\intercal \tilde{\mathbf  P}_n \bar{\mathbf B})^{-1} \bar{\mathbf B}^\intercal \tilde{\mathbf  P}_n \bar{\mathbf A},\\
\tilde{\mathbf P}_n=\mathbf P_n(\mathbf I - 2 \lambda \boldsymbol \Sigma_w \mathbf P_n)^{-1},\\
\boldsymbol \Sigma_w:=\DIAG\big( \frac{1}{n(s)} \DIAG(\underbrace{\Sigma_w(s),\ldots,\Sigma_w(s)}_{f(s)})  \big)_{s=1}^S.
\end{cases}
\end{equation}
The dimensions of the above Riccati equations are independent of  the number of agents in each sub-population~$s \in \mathbb{N}_S$.
\begin{remark}
\emph{For the weakly-coupled case in Definition~\ref{def:weak}, the  deep Riccati equation decomposes further into $S+\sum_{s=1}^S f(s)$ smaller Riccati equations, where~\eqref{eq:DR_2}  can be expressed for any $j \in \mathbb{N}_{f(s)}$ and any $s \in \mathcal{S}$ as follows:
\begin{equation}\label{eq:DR_3}
\begin{cases}
\bar P^j_n(s)=Q(s)+  \bar Q^j(s) + (A(s)+\bar A^j(s))^\intercal \tilde{\bar P}^j_n(s)\\
 \times (A(s)+\bar A^j(s)) -  (A(s) + \bar A^j(s))^\intercal \tilde{\bar P}^j_n (B(s)+\bar B^j(s))\\
\times  \big( ( R(s)+\bar R^j(s)) 
 + (B(s)+\bar B^j(s))^\intercal  \tilde{\bar P}^j_n(s) \\
 \times (B(s)\hspace{-.1cm}+\hspace{-.1cm}\bar B^j(s))\big)^{\hspace{-.1cm}-1} 
 (B(s)\hspace{-.1cm}+\hspace{-.1cm}\bar B^j(s))^\intercal \tilde{\bar P}^j_n(s) (A(s)+\bar A^j(s)),\\
\tilde{\bar P}^j_n(s)=\bar P^j_n(s)(\mathbf I_{d^s_x \times d^s_x} - 2 \lambda \frac{\mu(s)}{n(s)} \Sigma_w(s) \bar P^j_n(s))^{-1},
\end{cases}
\end{equation}
where $
\mathbf P_n= \DIAG\big(\mu(s)\DIAG( \bar P^j_n(s) )_{j=1}^{f(s)}\big)_{s=1}^S$ and $\tilde{\mathbf P}_n= \DIAG\big(\mu(s)\DIAG( \tilde{\bar P}^j_n(s) )_{j=1}^{f(s)}\big)_{s=1}^S$.
}
\end{remark}

\begin{assumption}\label{ass:unique}
The followings hold:
\begin{itemize}
\item [(I)] $Q(s)\geq 0$, $\bar{\mathbf Q} \geq 0$, $R(s) >0$ and $\bar{\mathbf R}>0$, $\forall s \in \mathbb{N}_S$. 
\item[(II)] Pairs $(A(s),B(s))$, $s \in \mathbb{N}_S$, and  $(\bar{\mathbf A},\bar{\mathbf B})$ are stablizable (controllable). In addition, pairs $(A(s), Q^{1/2}(s))$,  $s \in \mathbb{N}_S$, and $(\bar{\mathbf A}, \bar{ \mathbf Q}^{1/2})$ are detectable (observable).
\item[(III)] Equations~\eqref{eq:DR_1} and~\eqref{eq:DR_2} admit positive solutions $P_n(s)>0$, $\forall s \in \mathbb{N}_S$, and $\mathbf P_n>0$. In addition, $\mathbf I_{d^s_x \times d^s_x}- 2\lambda \frac{\mu(s)}{n(s)}\Sigma_w(s) P_n(s) >0$, $\forall s \in \mathbb{N}_S$, and $\mathbf I_{\sum_{s=1}^S f(s) d^s_x \times\sum_{s=1}^S f(s) d^s_x }- 2\lambda \boldsymbol \Sigma_w \mathbf P_n >0$.
\end{itemize}
\end{assumption}
\begin{remark}
\emph{Part (I) of Assumption~\ref{ass:unique} is a standard convexity condition and  Part (II) is required  to   ensure that the  system is stablizable. Part (III) is a standard condition in risk-sensitive  LQ problems that guarantees  the deep  Riccati equation, presented in~\eqref{eq:DR_1} and~\eqref{eq:DR_2}, admit a unique positive definite solution. Suppose  matrices in the dynamics~\eqref{eq:dynamics_original} and cost functions~\eqref{eq:cost_original} are independent of the size of sub-populations $n(s)$, $s \in \mathbb{N}_S$; then,  if the risk-factor $\lambda$  decreases and/or the number of agents (i.e. $(n(s)$, $s \in \mathbb{N}_S$) increases, the  positiveness condition in Part (III) gets more relaxed such that  it automatically holds if $\lambda=0$ and/or $n(s)=\infty$.}
\end{remark}
Since the  certainty equivalence theorem does  not hold in the risk-sensitive case, which is in  contrast to the risk-neutral model,  we present a few key covariance properties.
\begin{lemma}\label{lemma:correlation}
For any $s \in \mathbb{N}_S$, the followings  hold for any $i \neq k \in \mathbb{N}_{n(s)}$, $j \neq m \in \mathbb{N}_{f(s)}$ and $t \in \mathbb{N}$: $\Exp{\Delta w^i_t (\Delta w^i_t)^\intercal}=\big(1- \frac{1}{n(s)} \sum_{j=1}^{f(s)} (\alpha^{i,j}(s))^2\big)\Sigma_w(s), 
\Exp{\Delta w^i_t (\Delta w^k_t)^\intercal}=$ $-\frac{1}{n(s)} \sum_{j=1}^{f(s)} \alpha^{i,j}(s) \alpha^{k,j} \Sigma_w(s)$, $\Exp{\bar w^j_t(s) (\bar w^j_t(s))^\intercal}=\frac{1}{n(s)}\Sigma_w$, $\Exp{\Delta w^i_t (\bar w^j_t(s))^\intercal}=\mathbf 0$, $
\Exp{\bar w^j_t(s) (\bar w^m_t(s))^\intercal}=\mathbf 0$.
\end{lemma}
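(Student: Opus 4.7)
The plan is to verify each of the five identities by direct computation, exploiting three basic facts: (a) the definitions $\bar w^j_t(s)=\frac{1}{n(s)}\sum_{i=1}^{n(s)}\alpha^{i,j}(s)w^i_t$ and $\Delta w^i_t=w^i_t-\sum_{j=1}^{f(s)}\alpha^{i,j}(s)\bar w^j_t(s)$; (b) the noises are i.i.d.\ across agents within a sub-population, so $\Exp{w^i_t(w^k_t)^\intercal}=\ID{i=k}\Sigma_w(s)$ and $\Exp{w^i_t}=\mathbf 0$; and (c) the orthogonality relation~\eqref{eq:orthogonal_features}. The conceptual picture is that within a single sub-population the map $w^i_t\mapsto \sum_{j}\alpha^{i,j}(s)\bar w^j_t(s)$ is an orthogonal projection (with respect to the covariance inner product) of $w^i_t$ onto the span of the feature-weighted averages, and $\Delta w^i_t$ is the residual. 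Once this picture is in place, the entire lemma reduces to orthogonal-decomposition bookkeeping.

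I would organize the proof in four short steps. First, compute $\Exp{\bar w^j_t(s)(\bar w^m_t(s))^\intercal}$ for arbitrary $j,m\in\mathbb{N}_{f(s)}$: expand the two sums, use independence to collapse to the diagonal $i=k$, and then invoke \eqref{eq:orthogonal_features} to obtain $\frac{1}{n(s)}\ID{j=m}\Sigma_w(s)$. This single calculation simultaneously proves the third and fifth identities. Second, compute the auxiliary quantity $\Exp{w^i_t(\bar w^j_t(s))^\intercal}=\frac{1}{n(s)}\alpha^{i,j}(s)\Sigma_w(s)$, again by collapsing the sum through independence.

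Third, to establish $\Exp{\Delta w^i_t(\bar w^j_t(s))^\intercal}=\mathbf 0$, I would expand $\Delta w^i_t$ by definition and use the two previous computations: the $w^i_t$-term contributes $\frac{1}{n(s)}\alpha^{i,j}(s)\Sigma_w(s)$, while the projection term contributes $\sum_{j'}\alpha^{i,j'}(s)\cdot\frac{1}{n(s)}\ID{j=j'}\Sigma_w(s)$, which is exactly the same quantity; the two cancel. Fourth, for $\Exp{\Delta w^i_t(\Delta w^k_t)^\intercal}$, I would bilinearly expand both $\Delta w^i_t$ and $\Delta w^k_t$, yielding four terms; three of them evaluate to $\frac{1}{n(s)}\sum_{j}\alpha^{i,j}(s)\alpha^{k,j}(s)\Sigma_w(s)$ (two with a minus sign from the cross terms and one with a plus sign from the projection–projection term), and the remaining term $\Exp{w^i_t(w^k_t)^\intercal}=\ID{i=k}\Sigma_w(s)$ supplies the $\Sigma_w(s)$ contribution. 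Combining, the net is $\ID{i=k}\Sigma_w(s)-\frac{1}{n(s)}\sum_{j}\alpha^{i,j}(s)\alpha^{k,j}(s)\Sigma_w(s)$. Specializing to $i=k$ gives the first identity and to $i\neq k$ gives the second.

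There is no real obstacle in this proof; the only point requiring mild care is the bookkeeping in the fourth step, where the cross term $\sum_{j'}\alpha^{i,j'}(s)\Exp{\bar w^{j'}_t(s)(w^k_t)^\intercal}$ must be handled in the same way as the auxiliary from the second step (by symmetry, since $w^k_t$ plays the role of $w^i_t$). The orthogonality relation~\eqref{eq:orthogonal_features} is what makes the feature-indexed sums collapse cleanly, and it is exactly this collapse that is responsible for both the $\frac{1}{n(s)}$ scaling in the deep-noise covariance and for the Pythagorean-style identity that splits the total noise covariance into a ``deep'' component and a ``gauge'' component.
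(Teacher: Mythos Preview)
Your proposal is correct and is exactly the direct computation the paper has in mind: the paper's own proof simply states that the identities follow from the orthogonality relation~\eqref{eq:orthogonal_features} and the gauge transformation~\eqref{eq:Gauge}, and your four steps are nothing more than a careful unpacking of that sentence. There is no alternative route here to compare against.
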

\begin{proof}
The proof follows directly from~\eqref{eq:orthogonal_features} and~\eqref{eq:Gauge}. 
\end{proof}
From~\eqref{eq:DR_1} and~\eqref{eq:DR_2}, define for any sub-population  $s \in \mathbb{N}_S$: $
\theta^\ast_n(s):=-( R(s) +B^\intercal(s) \tilde P_n(s) B(s))^{-1} B^\intercal(s) \tilde P_n(s) A(s)$ and $
\bar{\boldsymbol \theta}^\ast_n=:-(\bar{\mathbf R} +\bar{\mathbf B}^\intercal \tilde{\mathbf P}_n \bar{\mathbf B})^{-1} \bar{\mathbf B}^\intercal \tilde{\mathbf P}_n \bar{\mathbf A}$.
Let  $\bar \theta^{\ast,j}_n(s)$ denote a  block of matrix $\bar{\boldsymbol \theta}^\ast_n$  that is associated with the $j$-th feature of sub-population $s \in \mathbb{N}_S$, $ j \in \mathbb{N}_{f(s)}$.
\begin{theorem}[Model-known solution~\cite{Jalal2019risk}]\label{thm:model_known_optimal}
Let Assumption~\ref{ass:unique} hold. There exists a unique stationary optimal linear strategy such that for any $i \in \mathbb{N}_{n(s)}$ and $s \in \mathbb{N}_S$ at time $t \in \mathbb{N}$:
\begin{equation}\label{eq:optimal_model_known}
u^i_t=\theta^\ast_n(s) x^i_t - \sum_{j=1}^{f(s)} \alpha^{i,j}(s) \theta^\ast_n(s) \bar x^j_t(s)+ \sum_{j=1}^{f(s)} \alpha^{i,j}(s) \bar \theta^{\ast,j}_n(s) \bar{\mathbf x}_t.
\end{equation}
\end{theorem}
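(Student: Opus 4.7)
The plan is to decouple the risk-sensitive team problem into $S+1$ independent risk-sensitive LQR subproblems via the gauge transformation~\eqref{eq:Gauge}, and then to invoke Jacobson's classical solution for each subproblem. The excerpt has already established the two necessary structural facts: the dynamics split into the aggregate recursion $\bar{\mathbf x}_{t+1}=\bar{\mathbf A}\bar{\mathbf x}_t+\bar{\mathbf B}\bar{\mathbf u}_t+\bar{\mathbf w}_t$ and the per-agent deviation recursion $\Delta x^i_{t+1}=A(s)\Delta x^i_t+B(s)\Delta u^i_t+\Delta w^i_t$; and, by Lemma~\ref{lemma:correlation}, the noise $\bar w^j_t(s)$ is uncorrelated (hence independent, being Gaussian) with every deviation noise $\Delta w^i_t$.

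Next I would show that the team cost~\eqref{eq:bar_c_per} splits additively into a deep-state/action quadratic plus a sum of deviation quadratics, using the orthogonality relations~\eqref{eq:orthogonal_relation} to annihilate the cross terms that arise when $x^i_t$ and $u^i_t$ are replaced by their gauge decompositions in~\eqref{eq:cost_original}; this is exactly the expression for $\bar c_t$ displayed just before~\eqref{eq:DR_1}. Because the deep and deviation processes are driven by independent Gaussian noises and involve disjoint state/action variables, $\exp(\lambda\sum_{t=1}^T \bar c_t)$ factorizes into two stochastically independent exponentials. Taking logarithms then turns $J_{n,\lambda}$ into the sum of one risk-sensitive LQG criterion associated with the aggregate model $(\bar{\mathbf A},\bar{\mathbf B},\bar{\mathbf Q},\bar{\mathbf R},\boldsymbol\Sigma_w)$ and $S$ sub-population-level risk-sensitive LQG criteria for $(A(s),B(s),Q(s),R(s))$ with effective variance scaling $\mu(s)/n(s)$.

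Under Assumption~\ref{ass:unique}, each of the resulting Riccati equations~\eqref{eq:DR_1} and~\eqref{eq:DR_2} admits a unique positive-definite stabilizing solution, and Jacobson's classical risk-sensitive LQG result yields the unique stationary linear optimal feedback laws $\Delta u^i_t=\theta^\ast_n(s)\Delta x^i_t$ and $\bar{\mathbf u}_t=\bar{\boldsymbol\theta}^\ast_n\bar{\mathbf x}_t$. Inverting the gauge transformation via $u^i_t=\Delta u^i_t+\sum_j\alpha^{i,j}(s)\bar u^j_t(s)$, reading off the $j$-th block as $\bar u^j_t(s)=\bar\theta^{\ast,j}_n(s)\bar{\mathbf x}_t$, and substituting $\Delta x^i_t=x^i_t-\sum_j\alpha^{i,j}(s)\bar x^j_t(s)$ recovers~\eqref{eq:optimal_model_known}.

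The main obstacle is justifying that the two decoupled subproblems can indeed be optimized independently despite the non-classical DSS information structure. This reduces to observing that each agent's information $(x^i_{1:t},\bar{\mathbf x}_{1:t})$ is precisely rich enough to reconstruct both $\bar{\mathbf x}_t$ and $\Delta x^i_t$ without introducing any additional coupling across agents, and that the within-sub-population constraints $\sum_i\sum_j\alpha^{i,j}(s)\Delta x^i_t=\mathbf 0$ inherited from~\eqref{eq:linear_dependence} are automatically preserved by the symmetric linear feedback $\theta^\ast_n(s)\Delta x^i_t$. Uniqueness and stationarity of the resulting gain then follow from the corresponding properties of the Riccati solutions under Assumption~\ref{ass:unique}.
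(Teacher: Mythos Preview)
Your approach is essentially the same as the paper's: the paper's proof simply invokes the linear dependence relations~\eqref{eq:linear_dependence}, the orthogonality relations~\eqref{eq:orthogonal_relation}, and the covariance properties of Lemma~\ref{lemma:correlation} to obtain a low-dimensional representation, and then defers the details to~\cite{Jalal2019risk}. Your sketch fleshes out exactly this route---gauge decomposition, additive cost splitting, independence of $\bar{\mathbf w}_t$ and $\{\Delta w^i_t\}$, Jacobson's formula on each piece, and inversion of~\eqref{eq:Gauge}.

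One technical point you gloss over: while $\bar{\mathbf w}_t$ is indeed independent of the deviation noises, Lemma~\ref{lemma:correlation} also shows that within a sub-population the $\Delta w^i_t$ are \emph{mutually correlated} (the cross-covariance $-\tfrac{1}{n(s)}\sum_j\alpha^{i,j}(s)\alpha^{k,j}(s)\Sigma_w(s)$ is generally nonzero). Hence the deviation exponential does not immediately factorize across agents into $n(s)$ independent scalar-agent LEQG problems, and the reduction to the single $d^s_x\times d^s_x$ Riccati equation~\eqref{eq:DR_1} with effective covariance $\tfrac{\mu(s)}{n(s)}\Sigma_w(s)$ requires an additional argument exploiting the specific correlation structure together with the constraint~\eqref{eq:linear_dependence}. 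This is precisely the step the paper outsources to~\cite{Jalal2019risk}; your sketch would be strengthened by flagging it explicitly rather than asserting the $S$ sub-population criteria directly.
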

\begin{proof}
The proof follows from the  linearly dependent equations~\eqref{eq:linear_dependence}, orthogonal relations~\eqref{eq:orthogonal_relation} and  covariance  properties in Lemma~\ref{lemma:correlation}, leading to a low-dimensional representation of the  solution. For more details, see~\cite[Theorem 1]{Jalal2019risk}.
\end{proof}
At the initial time,  every agent at any  sub-population $s \in \mathbb{N}_S$ solves two Riccati equations: one Riccati equation in~\eqref{eq:DR_1}  with the dimension $d^s_x \times d^s_x$ (assigned  specifically to the sub-population $s$) and one common Riccati equation in~\eqref{eq:DR_2} with the dimension $\sum_{s=1}^S f(s) d^s_x \times \sum_{s=1}^S f(s) d^s_x$.  For the special case of weakly coupled agents, the common Riccati equation decomposes into $\sum_{s=1}^S f(s)$ smaller Riccati equations. In this case,   every agent needs to solve only $f(s)+1$ Riccati equations in~\eqref{eq:DR_1} and~\eqref{eq:DR_3} with the dimensions $d^s_x \times d^s_x$. During the control process, each agent computes its action according to~\eqref{eq:optimal_model_known} based on the above Riccati solutions, its local (private)  state and  influence factors as well as common (public) deep states.

\subsection{Model-based approach}
From Theorem~\ref{thm:model_known_optimal},  the optimization problem in action space is strictly convex and there is no loss of optimality in restricting attention to stationary linear strategies of the form~\eqref{eq:optimal_model_known}. However, the convexity in  action space does not lead to  the convexity  in  policy space; see a simple counterexample in~\cite{fazel2018global}.  In what follows, we provide an analytical proof showing that  policy gradient methods converge to the globally optimal solution~\eqref{eq:optimal_model_known} based on the concepts of gradient domination and locally Lipschitz continuity in~\cite{fazel2018global}. Consider a  stationary strategy $\boldsymbol \theta:=\{ \theta(1),\ldots,\theta(S), \bar{\boldsymbol \theta} \}$, where $\theta(s)$ is a $d^s_u \times d^s_x$ matrix, $s \in \mathbb{N}_S$, and $\bar{\boldsymbol \theta}$ is a  $\sum_{s=1}^S f(s) d^u_s \times \sum_{s=1}^S f(s) d^s_x$ matrix.  At any time $t \in \mathbb{N}$, one has:
$\Delta u^i_t= \theta(s) \Delta x^i_t, \hspace{.1cm} i \in \mathbb{N}_{n(s)}, s \in \mathbb{N}_S$  and $
\bar{\mathbf u}_t= \bar{\boldsymbol \theta} \bar{\mathbf x}_t$.
For any sub-population $s \in \mathbb{N}_S$, define:
\begin{equation}
\begin{cases}
P_{\theta(s)}=Q(s)+A^\intercal(s) P_{\theta(s)} A(s)\\
\qquad + (A(s) -B(s)\theta(s))^\intercal \tilde P_{\theta(s)}(A(s) -B(s)\theta(s)),\\
\tilde P_{\theta(s)}= P_{\theta(s)} ( \mathbf I_{d^s_x\times d^s_x} - 2\lambda \frac{\mu(s)}{n(s)} \Sigma_w(s) P_{\theta(s)})^{-1},\\
\mathbf P_{\bar{\boldsymbol \theta}}=\bar{\mathbf Q}+\bar{\mathbf A}^\intercal \mathbf P_{\bar{\boldsymbol \theta}} \bar{\mathbf A}+  (\bar{\mathbf A} -\bar{\mathbf B} \bar{\boldsymbol \theta})^\intercal \tilde{\mathbf  P}_{\bar{\boldsymbol \theta}}(\bar{\mathbf A} -\bar{\mathbf B} \bar{\boldsymbol \theta}),\\
\tilde{\mathbf  P}_{\bar{\boldsymbol \theta}}= \bar{\mathbf P}_{\bar{\boldsymbol \theta}} ( \mathbf I- 2\lambda  \boldsymbol \Sigma_w \mathbf P_{\bar{\boldsymbol \theta}})^{-1}.
\end{cases}
\end{equation}

Following~\cite[Lemma 3.3]{zhang2019policyrisk}, we take the gradient of the cost function with respect to $\boldsymbol \theta$ and obtain
$
\nabla_{\boldsymbol \theta} J(\boldsymbol \theta):=\{\nabla_{\theta(1)} J(\boldsymbol \theta),\ldots, \nabla_{\theta(S)} J(\boldsymbol \theta),\nabla_{\bar{\boldsymbol \theta}} J(\boldsymbol \theta)  \}$,
where\footnote{For simplicity, it is assumed in~\cite{zhang2019policyrisk} that  the initial states have zero mean.} 
\begin{equation}\label{eq:GD_1}
\begin{cases}
\nabla_{\theta(s)} J(\boldsymbol \theta)=   2E_{\theta(s)}  \Sigma_{\theta(s)},\\
 \hspace{-.1cm}E_{\theta(s)}  \hspace{-.1cm}:=  \hspace{-.1cm} (R(s)\hspace{-.1cm}+\hspace{-.1cm} B^\intercal(s) \tilde P_{\theta(s)}B(s)) \theta(s) \hspace{-.1cm}- \hspace{-.1cm}    B^\intercal \tilde P_{\theta(s)} A(s),\\
\Sigma_{\theta(s)}:=\sum_{t=1}^\infty  \big([\mathbf I-2\lambda \frac{\mu(s)}{n(s)} \Sigma_w(s) P_{\theta(s)}]^{-\intercal}\\
\times (A(s) - B(s) \theta(s)) \big)^t [ \mathbf I-2\lambda \frac{\mu(s)}{n(s)} \Sigma_w(s) P_{\theta(s)}]^{-1}  \frac{\mu(s)}{n(s)} \Sigma_w(s)\\
\times  \big( (A(s) - B(s) \theta(s))^\intercal  [ \mathbf I-2\lambda \frac{\mu(s)}{n(s)} \Sigma_w(s) P_{\theta(s)}]\big)^t,\\
\nabla_{\bar{\boldsymbol \theta}} J(\boldsymbol \theta)= 2  \mathbf E_{\bar{\boldsymbol \theta}} \boldsymbol \Sigma_{\bar{\boldsymbol \theta}},\\
\mathbf E_{\bar{\boldsymbol \theta}}:= (\bar{\mathbf R}+ \bar{\mathbf B}^\intercal \tilde{\mathbf P}_{\bar{\boldsymbol \theta}}\bar{\mathbf B}\Big) \bar{\boldsymbol \theta}- \bar{\mathbf B}^\intercal \tilde{\mathbf P}_{\bar{\boldsymbol \theta}} \bar{\mathbf A},\\
\boldsymbol \Sigma_{\bar{\boldsymbol \theta}}:=\sum_{t=1}^\infty  \big([ \mathbf I-2\lambda \boldsymbol{ \Sigma}_w \bar{\mathbf P}_{\bar{\boldsymbol \theta}}]^{-\intercal}(\bar{\mathbf A} - \bar{\mathbf B} \bar{\boldsymbol \theta}) \big)^t \\
\times [ \mathbf I-2\lambda \boldsymbol{ \Sigma}_w  \bar{\mathbf P}_{\bar{\boldsymbol \theta}}]^{-1} \boldsymbol{\Sigma}_w \big( (\bar{\mathbf A} - \bar{\mathbf B} \bar{\boldsymbol \theta})^\intercal  [ \mathbf I-2\lambda \boldsymbol{\Sigma}_w \bar{\mathbf P}_{\bar{\boldsymbol \theta}}]\big)^t.
\end{cases}
\end{equation}
For the special case of risk-neutral $\lambda \rightarrow 0$ and/or $n(s) \rightarrow \infty$, one arrives at
\begin{equation}\label{eq:risk_neutral_pa}
\begin{cases}
E_{\theta(s)} :=   (R(s)+B^\intercal(s)  P_{\theta(s)}B(s)) \theta(s) -    B^\intercal  P_{\theta(s)} A(s),\\
\Sigma_{\theta(s)}= \frac{\mu(s)}{n(s)} \sum_{i=1}^{n(s)} \sum_{t=1}^\infty  \Delta x^i_t (\Delta x^i_t)^\intercal,\\
\mathbf E_{\bar{\boldsymbol \theta}}= (\bar{\mathbf R}+ \bar{\mathbf B}^\intercal \mathbf P_{\bar{\boldsymbol \theta}}\bar{\mathbf B}\Big) \bar{\boldsymbol \theta}- \bar{\mathbf B}^\intercal \mathbf P_{\bar{\boldsymbol \theta}} \bar{\mathbf A},\\
\boldsymbol \Sigma_{\bar{\boldsymbol \theta}}=\sum_{t=1}^\infty  \bar{\mathbf  x}_t (\bar{\mathbf x}_t)^\intercal.
\end{cases}
\end{equation}
We propose two gradient methods  described below, where $k \in \mathbb{N}$ denotes the iteration.
\begin{itemize}
\item \textbf{Policy gradient descent}: 
\begin{equation}\label{eq:GD}
\begin{cases}
 \theta_{k+1}(s)= \theta_k(s) - \eta \nabla_{\theta_k(s)} J(\boldsymbol \theta ),\quad s \in \mathbb{N}_S,\\
 \bar{\boldsymbol \theta}_{k+1}= \bar{\boldsymbol \theta}_{k} - \eta \nabla_{\bar{\boldsymbol \theta}} J(\boldsymbol \theta).
\end{cases}
\end{equation}
\item \textbf{Natural policy gradient descent}:
\begin{equation}\label{eq:NPGD}
\begin{cases}
 \theta_{k+1}(s)= \theta_k(s) - \eta \nabla_{\theta_k(s)} J(\boldsymbol \theta ) \Sigma_{\theta(s)}^{-1},\quad s \in \mathbb{N}_S,\\
 \bar{\boldsymbol \theta}_{k+1}= \bar{\boldsymbol \theta}_{k} - \eta \nabla_{\bar{\boldsymbol \theta}} J(\boldsymbol \theta) \boldsymbol \Sigma_{\bar{\boldsymbol \theta}}^{-1}.
\end{cases}
\end{equation}
\end{itemize}
We now make an assumption  that the initial policy is stable, which  is a standard assumption.
\begin{assumption}\label{ass:stable}
For the initial policy, $\rho(A(s) -B(s) \theta_1(s))<1$, $s \in \mathbb{N}_S$, and  $\rho(\bar{\mathbf A} -\bar{\mathbf B} \bar{\boldsymbol  \theta}_1)<1$. In addition, part III of Assumption~\ref{ass:unique} holds for $P_{\theta_1(s)} >0$, $s \in \mathbb{N}_S$,  and $\mathbf{P}_{\bar{\boldsymbol \theta}_1} >0$.
\end{assumption}

 The solution of Riccati equation in the risk-sensitive model has a complex relationship with the policy, making it difficult to establish the  gradient dominance. Fortunately, when the number of agents goes to infinity,  the solution of the risk-sensitive Riccati equation converges to that of the risk-neutral one. This observation enables us to  establish an asymptotic global convergence result for the risk-sensitive cost function.

\begin{theorem}\label{thm2}
Let Assumptions~\ref{ass:unique} and~\ref{ass:stable} hold.  For a sufficiently small risk factor $\lambda$ and/or sufficiently large population $n(s)$, $\forall s \in \mathbb{N}_S$, the policy gradient algorithms in~\eqref{eq:GD} and~\eqref{eq:NPGD}  converge to the globally optimal solution $\boldsymbol \theta^\ast$ for an  adequately small step size $\eta$.
\end{theorem}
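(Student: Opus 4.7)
The plan is to exploit the decomposition induced by the gauge transformation~\eqref{eq:Gauge}: the cost function and its gradient in~\eqref{eq:GD_1} decouple into $S+1$ independent LQR subproblems, one for each sub-population in the deviation variables $\Delta x^i_t$ (with effective state-noise covariance $\mu(s)\Sigma_w(s)/n(s)$) and one for the deep-state system $(\bar{\mathbf A},\bar{\mathbf B},\bar{\mathbf Q},\bar{\mathbf R})$. Because the updates~\eqref{eq:GD} and~\eqref{eq:NPGD} act block-diagonally on $\{\theta(1),\ldots,\theta(S),\bar{\boldsymbol\theta}\}$, it suffices to prove global convergence for one generic risk-sensitive LQR subproblem and then combine the $S+1$ rate bounds by taking the worst step-size. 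This reduction is the structural payoff of Theorem~\ref{thm:model_known_optimal} and the orthogonal relations~\eqref{eq:orthogonal_relation}.

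For a single subproblem, I would follow the Fazel-style template: (i) establish a gradient-dominance (Polyak--{\L}ojasiewicz) inequality $J(\theta)-J(\theta^\ast)\le \kappa\|\nabla J(\theta)\|_F^2$ on the sub-level set $\{\theta:J(\theta)\le J(\theta_1)\}$, and (ii) show local Lipschitz smoothness of $J$ and $\nabla J$ on that set. For the risk-neutral case $\lambda=0$ this is known; the additional work is to propagate both estimates to small $\lambda>0$. The key observation I would use is that the risk-sensitive Riccati equations~\eqref{eq:DR_1}, \eqref{eq:DR_2} depend on the small parameter $\lambda\mu(s)\Sigma_w(s)/n(s)$ (resp.\ $\lambda\boldsymbol\Sigma_w$), so an implicit-function-theorem / perturbation argument around $\lambda=0$ (equivalently, $n(s)\to\infty$) gives $P_{\theta(s)}$, $\tilde P_{\theta(s)}$, $\Sigma_{\theta(s)}$ and their counterparts $\mathbf P_{\bar{\boldsymbol\theta}}$, $\tilde{\mathbf P}_{\bar{\boldsymbol\theta}}$, $\boldsymbol\Sigma_{\bar{\boldsymbol\theta}}$ as $C^1$ perturbations of the risk-neutral quantities in~\eqref{eq:risk_neutral_pa}, uniformly over the compact sub-level set guaranteed by Assumption~\ref{ass:stable}. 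Consequently, the PL constant $\kappa$ and smoothness constant $L$ of the risk-neutral case can be inflated by an $O(\lambda)$ factor, while remaining finite whenever Assumption~\ref{ass:unique}(III) is in force.

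With gradient dominance and smoothness in hand, I would close the argument by the standard descent lemma. For policy gradient~\eqref{eq:GD}, choosing $\eta\le 1/L$ yields $J(\boldsymbol\theta_{k+1})-J(\boldsymbol\theta^\ast)\le (1-\eta/\kappa)(J(\boldsymbol\theta_k)-J(\boldsymbol\theta^\ast))$, giving linear convergence; one must additionally verify that iterates remain in the stabilizing sub-level set, which follows by induction since the cost is non-increasing. For natural policy gradient~\eqref{eq:NPGD}, the preconditioning by $\Sigma_{\theta(s)}^{-1}$ and $\boldsymbol\Sigma_{\bar{\boldsymbol\theta}}^{-1}$ eliminates the state-covariance factor from $E_{\theta(s)}\Sigma_{\theta(s)}$, yielding the cleaner recursion $J(\boldsymbol\theta_{k+1})-J(\boldsymbol\theta^\ast)\le (1-2\eta\sigma_{\min}(\bar{\mathbf R}_{\mathrm{eff}}))\cdot(J(\boldsymbol\theta_k)-J(\boldsymbol\theta^\ast))$ under a similarly small step size. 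Uniform positive definiteness of $\Sigma_{\theta(s)}$ and $\boldsymbol\Sigma_{\bar{\boldsymbol\theta}}$ on the sub-level set, which is inherited from $\Sigma_w(s)>0$ via Lemma~\ref{lemma:correlation}, ensures the preconditioners are well-defined throughout the iteration.

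The main obstacle I anticipate is the perturbation step: $\Sigma_{\theta(s)}$ is an infinite sum whose summands depend on $P_{\theta(s)}$ through $[\mathbf I-2\lambda\tfrac{\mu(s)}{n(s)}\Sigma_w(s)P_{\theta(s)}]^{-1}$, so proving uniform spectral bounds on this inverse, together with a uniform contraction rate for $(A(s)-B(s)\theta(s))$, is what pins down the compact regime in which the risk-neutral PL/smoothness constants survive. Quantifying how small $\lambda$ (or how large $n(s)$) must be amounts to controlling $2\lambda\tfrac{\mu(s)}{n(s)}\|\Sigma_w(s)\|\cdot\sup_{\theta\in\text{sub-level}}\|P_{\theta(s)}\|<1$ with a uniform margin, and an analogous bound for the deep-state block; once these hold, the rest of the argument is a routine transcription of the risk-neutral analysis to the perturbed setting.
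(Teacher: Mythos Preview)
Your proposal is correct and follows essentially the same route as the paper: decompose via the gauge transformation into $S+1$ independent LQR blocks, invoke gradient dominance (PL) and local Lipschitz smoothness on each block in the style of Fazel et al., and close with the standard descent/contraction argument for~\eqref{eq:GD} and~\eqref{eq:NPGD}. The paper's own proof is considerably more terse---it simply cites the relevant lemmas from \cite{fazel2018global}, \cite{malik2020derivative}, and \cite{Masoud2020CDC} and appeals to the pre-theorem remark that the risk-sensitive Riccati solutions converge to the risk-neutral ones as $\lambda\mu(s)/n(s)\to 0$---whereas you make the perturbation step explicit via an implicit-function-theorem argument and identify the quantitative threshold $2\lambda\tfrac{\mu(s)}{n(s)}\|\Sigma_w(s)\|\sup_\theta\|P_{\theta(s)}\|<1$; this is a genuine sharpening of what the paper leaves implicit, but not a different strategy.
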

\begin{proof}
 Let $\boldsymbol \theta$  have a finite cost. 	Following from~ \cite[Lemma 11]{fazel2018global},  we obtain an upper bound on  the distance  between the cost function and   its  optimal value  in terms of   $E_{\theta(s)}$ and $\Sigma_{\theta(s)}$, $s \in \mathbb{N}_S$, as well as $ \mathbf E_{\bar{\boldsymbol \theta}}$ and $\boldsymbol \Sigma_{\bar{\boldsymbol \theta}}$ (which  represent the gradients in~\eqref{eq:GD_1}). Therefore,   there exists  a positive constant $L_1 (\boldsymbol \theta^\ast)$ such that $|J(\boldsymbol \theta)- J(\boldsymbol \theta^*)| \leq  L_1 (\boldsymbol \theta^\ast) \NF{\nabla_{\boldsymbol \theta} J(\boldsymbol \theta)}^2$. This inequality is known as gradient domination (or PL inequality~\cite{polyak1963gradient}). Furthermore,  we proceed according to~\cite[Lemmas 15 and 16]{malik2020derivative} to show that the cost  and gradient are  locally Lipschitz functions in the neighbourhood of~$\boldsymbol \theta$, where the Lipschitz constants depend on  $\varepsilon(\boldsymbol \theta)$.  In particular, given any $\boldsymbol \theta '$ satisfying the inequality $\NF{\boldsymbol \theta' - \boldsymbol \theta} < \varepsilon(\boldsymbol \theta)$, there exist  positive constants $L_2(\boldsymbol \theta)$ and $L_3(\boldsymbol \theta)$ such that  
$
	|J(\boldsymbol \theta ')- J(\boldsymbol \theta )|\leq L_2(\boldsymbol \theta )\NF{\boldsymbol \theta'-\boldsymbol \theta}$ 
	and
		$
	\NF{\nabla J_{\boldsymbol \theta'}(\boldsymbol \theta')- \nabla J_{\boldsymbol \theta}(\boldsymbol \theta)}\leq L_3(\boldsymbol \theta)\NF{\boldsymbol \theta'-\boldsymbol \theta}.
$ Following the proof technique proposed in~\cite[Theorem 7]{fazel2018global} and~\cite[Theorem 2]{Masoud2020CDC}, we select a  sufficiently  small  step size $\eta$ such  that the value of the cost decreases at each iteration. In particular,  for the natural policy gradient descent and  a sufficiently  large number of iterations $K$, one has: 
	$J ({\boldsymbol \theta}_{K+1})-J (\boldsymbol \theta^*)\leq
	\big(1-\eta \eta^{-1}_{max}\big)(J ({\boldsymbol \theta}_K)-J (\boldsymbol \theta^*))$. The above recursion is contractive  for a sufficiently small step size $\eta \leq \eta_{max}$.
\end{proof}

\section{Main results for Problem~\ref{prob2}}\label{sec:main2}

In this subsection, we propose  model-free policy gradient descent and natural policy gradient algorithms for the special case of risk neutral problem (where $\lambda \rightarrow 0$) such that
$
J(\boldsymbol \theta)= \limsup_{T \rightarrow \infty} \frac{1}{T}\Exp{ \sum_{t=1}^T \bar c_t}.
$

\begin{lemma}[Finite-horizon approximation]\label{lemma:rollout}
Let  ${\tilde J_T}(\boldsymbol \theta):=\mathbb{E}[\sum_{t=1}^{T} \bar c_t]$ for any $\boldsymbol \theta$ with a finite cost function. Let also $\varepsilon(T)$ be a positive  function such that  $\lim_{T \rightarrow \infty} \varepsilon(T)=0$. Then, there exists a sufficiently large horizon~$T$  for which $|\tilde J_T(\boldsymbol \theta) - J(\boldsymbol \theta)| \leq \varepsilon(T)$.
\end{lemma}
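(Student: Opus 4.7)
The plan is to exploit the exponential stability of the closed loop implied by a finite cost, and then reduce the claim to a standard Cesàro-mean argument for geometrically convergent sequences.

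\textbf{Step 1: Reduce to covariance convergence.} Using the gauge decomposition already established, I would rewrite
\begin{equation}
\Exp{\bar c_t}=\TR(\bar{\mathbf Q}\,M_t)+\TR(\bar{\mathbf R}\,\bar{\boldsymbol\theta}\,M_t\bar{\boldsymbol\theta}^\intercal)+\sum_{s=1}^S\tfrac{\mu(s)}{n(s)}\sum_{i=1}^{n(s)}\TR\bigl((Q(s)+\theta(s)^\intercal R(s)\theta(s))\,N_t^{i,s}\bigr),
\end{equation}
where $M_t:=\Exp{\bar{\mathbf x}_t\bar{\mathbf x}_t^\intercal}$ and $N_t^{i,s}:=\Exp{\Delta x^i_t(\Delta x^i_t)^\intercal}$. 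The finiteness of $J(\boldsymbol\theta)$ together with Assumption~\ref{ass:unique} forces $\rho(\bar{\mathbf A}-\bar{\mathbf B}\bar{\boldsymbol\theta})<1$ and $\rho(A(s)-B(s)\theta(s))<1$ for every $s\in\mathbb{N}_S$; otherwise the covariance iterations would diverge and $J$ would be infinite.

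\textbf{Step 2: Geometric convergence of the covariances.} Under the closed-loop policy, $M_t$ and $N_t^{i,s}$ obey discrete Lyapunov recursions driven by the noise covariances $\boldsymbol\Sigma_w$ and $\Sigma_w(s)$, respectively. Because the closed-loop matrices are Schur stable, classical Lyapunov theory (combined with Lemma~\ref{lemma:correlation} to handle the noise cross-covariances) shows the existence of stationary solutions $M_\infty,N_\infty^{i,s}$ satisfying the Lyapunov equations, and yields constants $C>0$ and $\rho\in(0,1)$ such that
\begin{equation}
\NF{M_t-M_\infty}+\sum_{s=1}^S\sum_{i=1}^{n(s)}\NF{N_t^{i,s}-N_\infty^{i,s}}\leq C\rho^t.
\end{equation}
Combining with Step~1, there exists $c_\infty(\boldsymbol\theta)$ and a constant $C'>0$ for which $|\Exp{\bar c_t}-c_\infty(\boldsymbol\theta)|\leq C'\rho^t$.

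\textbf{Step 3: Cesàro averaging and identification with $J(\boldsymbol\theta)$.} Summing the geometric bound over $t$ gives
\begin{equation}
\Bigl|\tfrac{1}{T}\tilde J_T(\boldsymbol\theta)-c_\infty(\boldsymbol\theta)\Bigr|\leq\tfrac{1}{T}\sum_{t=1}^{T}C'\rho^t\leq\tfrac{C'\rho}{T(1-\rho)}.
\end{equation}
Taking $T\to\infty$ inside the $\limsup$ in the definition of $J$ identifies $c_\infty(\boldsymbol\theta)=J(\boldsymbol\theta)$. Therefore, for any prescribed positive function $\varepsilon(T)\to 0$, one may pick $T$ large enough so that the right-hand side (which is $O(1/T)$) is dominated by $\varepsilon(T)$, establishing the claim (with $\tilde J_T$ understood up to normalization consistent with the $1/T$ scaling in the definition of $J$).

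\textbf{Main obstacle.} The only subtle point is justifying that a finite $J(\boldsymbol\theta)$ actually yields Schur stability of both closed-loop maps simultaneously; this requires the detectability part of Assumption~\ref{ass:unique}(II) to rule out unobservable unstable modes, after which the Lyapunov argument is routine. Everything else is an explicit geometric-series estimate.
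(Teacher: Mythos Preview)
Your argument is correct and is precisely the content that the paper defers to a reference: the paper's own proof consists of the single sentence ``The proof follows from~\cite[Lemma 26]{fazel2018global}.'' That cited lemma is exactly the stability-plus-geometric-tail estimate you have written out (there for the discounted setting, here adapted to the average-cost one), so there is no methodological difference---you have simply made explicit what the paper leaves to citation.

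Two minor comments. First, you rightly flagged the missing $1/T$ normalization: as literally written, $\tilde J_T(\boldsymbol\theta)=\Exp{\sum_{t=1}^T\bar c_t}$ grows linearly in $T$ while $J(\boldsymbol\theta)$ is a finite average, so the inequality only makes sense with the implicit $1/T$ you inserted in Step~3. Second, your closing sentence (``for any prescribed $\varepsilon(T)\to 0$ one may pick $T$ large enough so that the $O(1/T)$ bound is dominated by $\varepsilon(T)$'') is not literally correct for arbitrarily fast $\varepsilon(T)$, e.g.\ $\varepsilon(T)=T^{-2}$; this is an artefact of the lemma's awkward quantifier placement rather than a defect in your analysis. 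What you actually establish---the quantitative bound $\bigl|\tfrac{1}{T}\tilde J_T(\boldsymbol\theta)-J(\boldsymbol\theta)\bigr|\leq C'\rho/(T(1-\rho))$---is the operative statement, and it is what the downstream model-free analysis needs.
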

\begin{proof}
   	The proof follows from~\cite[Lemma 26]{fazel2018global}. 
\end{proof} 
Denote by  $\mathbb{S}_r$ a  set of uniformly distributed points with norm $r>0$ and  by  $\mathbb{B}_r$ the set of all uniformly distributed points  whose norms are at most $r$. Thus, 
$
J(\boldsymbol \theta)=\mathbb{E}_{\tilde{\boldsymbol \theta} \sim \mathbb{B}_r}[J(\boldsymbol \theta+\tilde{\boldsymbol \theta})].
$  Let  $\tilde{\boldsymbol \theta}:=\{\tilde \theta(1),\ldots,\tilde \theta(S),\tilde{\bar{\boldsymbol \theta}}\}$  be a set of  independent random  matrices  whose Frobenius norm is $r$.
\begin{lemma}[Zeroth-order optimization]\label{lemma:smooth}
	For  a smoothing factor $r >0$, 
	$
	\nabla_{\theta(s)} J(\boldsymbol \theta)
= \frac{d^s_xd^s_u}{r^2} \mathbb{E}_{\tilde{\boldsymbol \theta} \sim \mathbb{S}_r }[J(\boldsymbol \theta+\tilde{\boldsymbol \theta})\tilde \theta(s)]$, $s \in \mathbb{N}_S$,  and $
	\nabla_{\bar{\boldsymbol \theta}} J(\boldsymbol \theta)
= \frac{(\sum_{s=1}^S f(s) d^s_x)(\sum_{s=1}^S f(s) d^s_u)}{r^2} \mathbb{E}_{\tilde{\boldsymbol \theta} \sim \mathbb{S}_r }[J(\boldsymbol \theta+\tilde{\boldsymbol \theta}) \tilde{\bar{\boldsymbol  \theta}}]$.
\end{lemma}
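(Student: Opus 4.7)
The plan is to derive both identities as instances of the standard one-point zeroth-order gradient estimator obtained via the divergence theorem (Stokes' theorem), applied blockwise. The preceding display $J(\boldsymbol \theta) = \mathbb{E}_{\tilde{\boldsymbol \theta} \sim \mathbb{B}_r}[J(\boldsymbol \theta + \tilde{\boldsymbol \theta})]$ is to be read as defining the ball-smoothed surrogate on which the algorithm actually operates; Lemma~\ref{lemma:smooth} then computes the gradient of this surrogate as a surface-integral expectation.

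First, I would treat each parameter block in isolation. Identify $\theta(s)$ with a vector in $\mathbb{R}^{d}$ via vectorization, where $d = d^s_x d^s_u$, and similarly identify $\bar{\boldsymbol \theta}$ with a vector in $\mathbb{R}^{\bar d}$, $\bar d = (\sum_{s=1}^S f(s) d^s_x)(\sum_{s=1}^S f(s) d^s_u)$. Because $\tilde{\boldsymbol \theta}$ has independent blocks each uniform on its own sphere of Frobenius radius $r$, integrating out the remaining blocks yields a scalar function of $\theta(s)$ that is exactly the ball-smoothed version of $J$ in that single coordinate.

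Next, I would apply the Leibniz/Reynolds differentiation formula to this ball-smoothed function: the gradient converts into a boundary integral over $\mathbb{S}_r$ with the outward unit normal $v/r$ producing one factor of $1/r$, while the standard identity $\mathrm{Area}(\mathbb{S}_r)/\mathrm{Vol}(\mathbb{B}_r) = d/r$ in $\mathbb{R}^d$ supplies a second factor once the surface integral is renormalized to an expectation against the uniform measure on $\mathbb{S}_r$. Combining the two yields the prefactor $d/r^2 = d^s_x d^s_u/r^2$ and the asserted expression for $\nabla_{\theta(s)} J(\boldsymbol \theta)$. The identical argument applied to the block $\bar{\boldsymbol \theta}$ with $d$ replaced by $\bar d$ gives the second formula.

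The main obstacle is essentially bookkeeping rather than conceptual: carefully justifying the interchange of gradient and expectation (which follows from the local Lipschitz bound on $J$ already invoked in the proof of Theorem~\ref{thm2}) and tracking the correct dimension in the surface-to-volume ratio under the blockwise independent sphere sampling. Once those ingredients are in place, the lemma reduces block by block to the Flaxman--Kalai--McMahan-style identity used in~\cite{fazel2018global,malik2020derivative}, to which the paper can reasonably defer.
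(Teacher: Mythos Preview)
Your proposal is correct and follows essentially the same route as the paper: both reduce the lemma to the Flaxman--Kalai--McMahan one-point estimator~\cite{flaxman2004online} applied block by block. The only nuance is that the paper justifies the blockwise reduction via the additive decomposition of $J$ into $S+1$ decoupled terms (one per sub-population plus the deep-state block), whereas you argue by marginalizing out the independent perturbation blocks; both justifications are valid and lead to the same identity.
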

\begin{proof}
	The proof follows directly from the zeroth-order optimization approach~\cite[Lemma 1]{flaxman2004online} and the fact that  the cost function gets decoupled into $S+1$ additive terms. 
\end{proof} 

\begin{lemma}\label{lemma:ber}
Given  any $s \in \mathbb{N}_S$, let $\tilde { \theta}_1(s), \ldots, \tilde { \theta}_L(s)$, $L \in \mathbb{N}$, be   i.i.d. samples drawn uniformly  from $\mathbb{S}_r$. For any  $\varepsilon(L)>0$, the  following average converges to an $\varepsilon(L)$-neighbourhood  of the gradient  $\nabla J_{\theta(s)}(\boldsymbol \theta)$ in the Frobenius norm  with a probability greater than $1-({\frac{d^s_x d^s_u}{\varepsilon(L)}})^{-d^s_x d^s_u}$: 
$\hat \nabla_{ \theta(s)} J( \boldsymbol \theta)
	:=\frac{1}{L} \sum_{l=1}^{L}\frac{d^s_x d^s_u}{r^2}J(\boldsymbol \theta+\tilde{\boldsymbol \theta}_l)\tilde{\theta}_l(s)$.
Similarly,  let $\tilde {\bar{\boldsymbol \theta}}_1, \ldots, \tilde{\bar{\boldsymbol  \theta}}_L$, $L \in \mathbb{N}$, be   i.i.d. samples drawn uniformly  from $\mathbb{S}_r$.  Then, the  following average converges to an $\varepsilon(L)$-neighbourhood  of the gradient  $\nabla J_{\bar{\boldsymbol \theta}}(\boldsymbol \theta)$ in the Frobenius norm with a probability greater than $1-({\frac{  (\sum_{s=1}^S f(s) d^s_x)(\sum_{s=1}^S f(s) d^s_u)}{\varepsilon(L)}})^{-(\sum_{s=1}^S f(s) d^s_x)(\sum_{s=1}^S f(s) d^s_u)}$: 
		$
\hat \nabla_{\bar{\boldsymbol \theta}} J( \boldsymbol \theta)
	:=\frac{1}{L} \sum_{l=1}^{L}\frac{(\sum_{s=1}^S f(s) d^s_x)(\sum_{s=1}^S f(s) d^s_u)}{r^2}J(\boldsymbol \theta+\tilde{\boldsymbol \theta}_l)\tilde{\bar{\boldsymbol \theta}}_l$.
	\end{lemma}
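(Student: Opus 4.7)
The plan is to treat each of the two empirical averages as a Monte Carlo estimator of an unbiased gradient estimator and then apply a concentration inequality to the sum. Unbiasedness of each summand is immediate: by Lemma~\ref{lemma:smooth},
\[
\mathbb{E}\!\left[\frac{d^s_x d^s_u}{r^2} J(\boldsymbol\theta+\tilde{\boldsymbol\theta}_l)\tilde\theta_l(s)\right]
=\nabla_{\theta(s)}J(\boldsymbol\theta),
\]
and analogously for the $\bar{\boldsymbol\theta}$-component, so $\hat\nabla_{\theta(s)}J(\boldsymbol\theta)$ and $\hat\nabla_{\bar{\boldsymbol\theta}}J(\boldsymbol\theta)$ are sample means of $L$ i.i.d.\ unbiased estimators of the respective true gradients.

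First I would establish that each summand is bounded in Frobenius norm. Using the local Lipschitz continuity of the cost established inside the proof of Theorem~\ref{thm2}, for a smoothing radius $r<\varepsilon(\boldsymbol\theta)$ one has $|J(\boldsymbol\theta+\tilde{\boldsymbol\theta}_l)-J(\boldsymbol\theta)|\le L_2(\boldsymbol\theta)\,r$, so $J(\boldsymbol\theta+\tilde{\boldsymbol\theta}_l)$ is uniformly bounded along the perturbation sphere. Combined with $\NF{\tilde\theta_l(s)}\le r$, this yields a uniform bound $\NF{(d^s_x d^s_u/r^2) J(\boldsymbol\theta+\tilde{\boldsymbol\theta}_l)\tilde\theta_l(s)}\le M(\boldsymbol\theta,r)$ on every summand.

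Second, I would apply Markov's inequality to the $N$-th moment of the deviation with $N:=d^s_x d^s_u$. By a Rosenthal/Khintchine-type moment inequality for the sum of i.i.d.\ centered bounded random matrices (viewed as vectors in $\mathbb{R}^N$),
\[
\mathbb{E}\,\NF{\hat\nabla_{\theta(s)}J(\boldsymbol\theta)-\nabla_{\theta(s)}J(\boldsymbol\theta)}^{N}
\le \frac{(C\,M(\boldsymbol\theta,r))^{N}}{L^{N/2}}
\]
for a universal constant $C$. Markov's inequality then gives
\[
\mathbb{P}\!\left(\NF{\hat\nabla_{\theta(s)}J(\boldsymbol\theta)-\nabla_{\theta(s)}J(\boldsymbol\theta)}\ge\varepsilon(L)\right)
\le\left(\frac{C\,M(\boldsymbol\theta,r)}{\sqrt{L}\,\varepsilon(L)}\right)^{\!N}.
\]
Choosing $L$ large enough that $C\,M(\boldsymbol\theta,r)/\sqrt{L}\le \varepsilon(L)^{2}/(d^s_x d^s_u)$ produces exactly the stated failure probability $\bigl(d^s_x d^s_u/\varepsilon(L)\bigr)^{-d^s_x d^s_u}$. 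The same argument transfers verbatim to $\hat\nabla_{\bar{\boldsymbol\theta}}J(\boldsymbol\theta)$ with $N$ replaced by $(\sum_{s=1}^S f(s)d^s_x)(\sum_{s=1}^S f(s)d^s_u)$, using independence of the perturbation blocks to decouple the two gradient estimators.

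The main obstacle I expect is controlling the moment bound uniformly in the perturbation: this requires the smoothing radius $r$ to be small enough that every realization $\boldsymbol\theta+\tilde{\boldsymbol\theta}_l$ stays inside the Lipschitz neighbourhood of $\boldsymbol\theta$ (so that $J$ is finite and bounded), and it requires making the implicit $L$--$\varepsilon(L)$ tradeoff explicit so that the probability of deviation collapses at the claimed rate. Once the boundedness of the summands is secured, the remainder is a routine application of scalar concentration applied entrywise (with a union bound over the $N$ entries) or, equivalently, an $N$-th moment Markov bound; both routes yield the stated probability up to constants.
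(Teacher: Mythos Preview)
Your approach is correct and reaches the same conclusion as the paper, but by a different concentration route. The paper's own proof is essentially one line: it invokes Lemma~\ref{lemma:smooth} for unbiasedness (as you do) and then applies Bernstein's inequality, following the template of \cite[Lemma~30]{fazel2018global}. You instead bound the $N$-th moment of the deviation via a Rosenthal/Khintchine-type inequality and apply Markov's inequality at order $N=d^s_x d^s_u$ (respectively $N=(\sum_s f(s)d^s_x)(\sum_s f(s)d^s_u)$). Both routes rest on the same two ingredients---unbiasedness from the smoothing lemma and a uniform bound on each summand, which you correctly extract from the local Lipschitz continuity of $J$ established in the proof of Theorem~\ref{thm2}---and both deliver the stated polynomial failure probability after tuning $L$ against $\varepsilon(L)$. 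The Bernstein route is the standard one in this line of work and yields a cleaner sub-exponential tail with explicit variance and range parameters; your moment-plus-Markov argument is somewhat more elementary in that it avoids the matrix Bernstein machinery, at the cost of invoking a less common moment inequality and making the dependence of $L$ on the problem constants a bit less transparent. Either way, the key analytic step you identify---ensuring $r$ is small enough that every perturbed policy $\boldsymbol\theta+\tilde{\boldsymbol\theta}_l$ remains in the finite-cost Lipschitz neighbourhood of $\boldsymbol\theta$---is exactly the substantive condition needed, and it is also implicit in the paper's citation of \cite{fazel2018global}.
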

	\begin{proof}
Following~the steps proposed  in~\cite[Lemma 30]{fazel2018global},  	 the proof follows  from Lemma~\ref{lemma:smooth} and Bernstein's inequality.
	\end{proof}
We now  compute an  empirical gradient for a sufficiently large number of  samples $L$ and  rollouts $T$:
	\begin{equation}\label{eq:final_1}
	\begin{cases}
\tilde  \nabla^{L,T}_{ \theta(s)} J( \boldsymbol \theta)
	:=\frac{1}{L} \sum_{l=1}^{L}\frac{d^s_x d^s_u}{r^2} \sum_{t=1}^T \tilde J_T(\boldsymbol \theta+\tilde{\boldsymbol \theta}_l)\tilde{\theta}_l(s),\\
\tilde \nabla^{L,T}_{\bar{\boldsymbol \theta}} J( \boldsymbol \theta)
	:=\frac{1}{L} \sum_{l=1}^{L}\frac{(\sum_{s=1}^S f(s) d^s_x)(\sum_{s=1}^S f(s) d^s_u)}{r^2}\\
	\hspace{2cm} \times \sum_{t=1}^T \tilde J_T(\boldsymbol \theta+\tilde{\boldsymbol \theta}_l)\tilde{\bar{\boldsymbol \theta}}_l.
	\end{cases}
	\end{equation}
\begin{theorem}
Let Assumptions~\ref{ass:unique} and~\ref{ass:stable} hold.  There exists a sufficiently small step size $\eta$ such that  the following inequality holds with a probability converging to one  as the number of samples $L$ and rollouts $T$ tend to infinity,
	$|\tilde J_{L,T}(\boldsymbol \theta)-J(\boldsymbol \theta^\ast)| \leq \varepsilon(L,T)$, 
	where $\varepsilon(L,T)=\text{poly}(1/L,1/T)$.
\end{theorem}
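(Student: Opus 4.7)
The plan is to combine the four building blocks already provided in Lemmas~\ref{lemma:rollout}--\ref{lemma:ber} with the gradient-domination machinery invoked in the proof of Theorem~\ref{thm2}. I decompose the target quantity as
\begin{equation*}
|\tilde J_{L,T}(\boldsymbol \theta_K) - J(\boldsymbol \theta^\ast)| \leq |\tilde J_{L,T}(\boldsymbol \theta_K) - J(\boldsymbol \theta_K)| + |J(\boldsymbol \theta_K) - J(\boldsymbol \theta^\ast)|,
\end{equation*}
where $\boldsymbol \theta_K$ is the $K$-th iterate of the model-free scheme that descends along $\tilde \nabla^{L,T}_{\boldsymbol \theta}J(\boldsymbol \theta_k)$ defined in~\eqref{eq:final_1}. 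The first summand is handled by Lemma~\ref{lemma:rollout} together with the local Lipschitz constant $L_2(\boldsymbol \theta)$ from the proof of Theorem~\ref{thm2}; the second is attacked by a perturbed-descent analysis built on gradient domination.

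First I would control the gap between the empirical gradient $\tilde \nabla^{L,T}_{\boldsymbol \theta}J(\boldsymbol \theta)$ and the exact gradient $\nabla_{\boldsymbol \theta} J(\boldsymbol \theta)$ in three stages: (i) a \emph{smoothing bias} $O(r)$ incurred by replacing $J$ with $\mathbb{E}_{\tilde{\boldsymbol \theta} \sim \mathbb{B}_r}[J(\boldsymbol \theta + \tilde{\boldsymbol \theta})]$ via Lemma~\ref{lemma:smooth} and the local Lipschitz constant $L_3(\boldsymbol \theta)$; (ii) a \emph{truncation bias} $O(\varepsilon(T))$ from using $\tilde J_T$ instead of $J$ via Lemma~\ref{lemma:rollout}; and (iii) a \emph{sample-average concentration error} $O(\varepsilon(L))$ via Lemma~\ref{lemma:ber} and Bernstein's inequality. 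By choosing $r$, $L$, $T$ appropriately this composite error can be made $\text{poly}(1/L, 1/T)$ with probability at least $1 - \text{poly}(1/L, 1/T)$ at each iterate.

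Second I would import the argument used in Theorem~\ref{thm2}: the gradient-domination inequality $|J(\boldsymbol \theta)-J(\boldsymbol \theta^\ast)| \leq L_1(\boldsymbol \theta^\ast)\NF{\nabla_{\boldsymbol \theta}J(\boldsymbol \theta)}^2$ together with local Lipschitz continuity of $\nabla J$ imply that, when the empirical direction is within an $\varepsilon$-ball of the exact gradient, a descent step with sufficiently small $\eta$ decreases the cost up to an additive slack $\delta(L,T) = \text{poly}(1/L,1/T)$. Unrolling the perturbed contraction
\begin{equation*}
J(\boldsymbol \theta_{k+1}) - J(\boldsymbol \theta^\ast) \leq (1 - \eta\eta_{\max}^{-1})(J(\boldsymbol \theta_k)-J(\boldsymbol \theta^\ast)) + \delta(L,T)
\end{equation*}
over $K$ iterations yields $J(\boldsymbol \theta_K)-J(\boldsymbol \theta^\ast) = \text{poly}(1/L,1/T)$ once $K$ is large enough, and adding back the truncation error from the first summand delivers the claimed bound.

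The main obstacle is the uniform handling of stability along the stochastic trajectory. The Lipschitz constants $L_2(\boldsymbol \theta), L_3(\boldsymbol \theta)$ blow up as $\rho(\bar{\mathbf A} - \bar{\mathbf B}\bar{\boldsymbol \theta})\to 1$ or as $\mathbf P_{\bar{\boldsymbol \theta}}$ approaches the boundary of positive-definiteness (Assumption~\ref{ass:stable}), so one must guarantee that \emph{every} iterate $\boldsymbol \theta_k$, $k \leq K$, remains in a sublevel set on which these constants stay bounded. This is achieved by a union bound over the $K$ iterations combined with an inductive argument: if $J(\boldsymbol \theta_k)$ has not yet escaped the initial sublevel set and the empirical gradient concentrates as above, then $\boldsymbol \theta_{k+1}$ also lies in that sublevel set. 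Choosing the smoothing radius $r$, sample size $L$ and rollout length $T$ jointly, and applying the union bound, closes the induction and yields the probabilistic guarantee $\varepsilon(L,T) = \text{poly}(1/L, 1/T)$.
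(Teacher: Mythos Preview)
Your proposal is correct and follows essentially the same route as the paper: both bound the empirical-gradient error via Lemmas~\ref{lemma:rollout}--\ref{lemma:ber} and Bernstein's inequality, then feed this into the gradient-domination/Lipschitz contraction of Theorem~\ref{thm2} to obtain a perturbed linear-convergence recursion. The paper absorbs the perturbation into the contraction factor (replacing $1-\eta\eta_{\max}^{-1}$ by $1-\tfrac{1}{2}\eta\eta_{\max}^{-1}$) rather than carrying your additive slack $\delta(L,T)$, and is less explicit than you are about the inductive sublevel-set/union-bound argument, but these are cosmetic differences.
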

\begin{proof}
It results from~\cite[Theorem 31]{fazel2018global} and \cite[Theorem 3]{Masoud2020CDC} that  the following inequality  at iteration $K\in \mathbb{N}$  holds for a sufficiently small step size $\eta \leq \eta_{max}$:  
$J (\boldsymbol \theta_{K+1})-J (\boldsymbol \theta^*)\leq (1-\eta\eta_{max}^{-1})(J (\boldsymbol \theta_K)-J (\boldsymbol \theta^*))$.
At iteration $K$,  let  $\tilde{\nabla}_K$  denote  the empirical  gradient in~\eqref{eq:final_1} and $\hat{\boldsymbol{\theta}}_{K+1}=\boldsymbol{\theta}_K-\eta \tilde{\nabla}_K$ denote the iterate with  the empirical  gradient.  Due to the locally Lipschitz continuity, Lemmas~\ref{lemma:rollout}--\ref{lemma:ber}, and  Bernstein inequality,  the approximate $\hat{\boldsymbol \theta}_{K+1}$  converges  to its exact value $\boldsymbol \theta_{K+1}$ as  the number of samples and rollouts tend to infinity with a probability greater than $1-({\frac{z}{\varepsilon(L)}})^{-z}$, where $z:=(\sum_{s=1}^S (f(s)+1) d^s_x)(\sum_{s=1}^S (f(s)+1) d^s_u)$.   Subsequently, one gets
$J (\hat{\boldsymbol \theta}_{K+1})-J (\boldsymbol \theta^*)\leq(1-\frac{1}{2}\eta \eta_{max}^{-1})(J (\boldsymbol \theta_K)-J (\boldsymbol \theta^*))$, when $ J (\boldsymbol \theta_K)-J (\boldsymbol \theta^*) \leq \varepsilon(L,T)$. This recursion is contractive, which  is  similar to the proof of Theorem~\ref{thm2}. 
\end{proof}


\begin{remark}
\emph{From~\cite{yang2019provably}, it can be shown that  the policy  gradient algorithms can be extended to actor-critic ones.}
\end{remark}

\section{Numerical examples}\label{sec:numerical}
In this section, we provide two numerical examples.

\textbf{Example 1.}  Consider a risk-sensitive LQ deep structured team with the following parameters: $n=10$, $A=0.9$,   $B=0.4$, $\bar Q=2$,  $R=1$, $Q=1$, $\bar R=1$,  $\eta=5$, $T=10$, $L=100$,  $\alpha^{1:6}= \sqrt{0.5}$,  $\alpha^{7}=\sqrt{1.5}$, 
   $\alpha^{8}= 1$, $\alpha^{9}=\sqrt{2}$,
   $\alpha^{10}=\sqrt{2.5}$,  $w^i_t \sim \text{norm}(0,0.1)$, and $x^i_1 \sim  \text{norm}(0,0.1)$. It is observed in Figure~\ref{fig:PN} that the model-based policy gradient descent algorithm converges to the global optimal solution when the risk factor is $\lambda=0.1$.
  \begin{figure}[t!]
\centering
\scalebox{0.9}{
\includegraphics[ trim={0cm 8cm 0 8cm},clip,width=\linewidth]{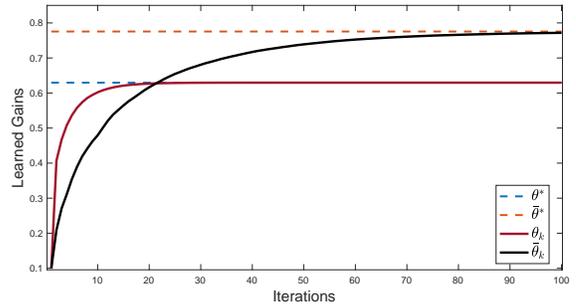}}
\caption{The convergence of the model-based policy gradient descent  algorithm  in Example 1 with a  risk-sensitive cost function.}\label{fig:PN}
\end{figure}

\textbf{Example 2.} Consider  a risk-neutral LQ deep structured teams with  the following parameters: $n=10$, $A=B=Q=1$, $\bar Q=2$,  $R=2$, $\bar R=1$, $\eta=0.2$, $T=10$, $L=100$,  $ \alpha^{1:9}= \sqrt{0.1}$, $\alpha^{10}=\sqrt{9.1}$, $w^i_t \sim \text{norm}(0,0.02)$, and  $x^i_1 \sim  \text{unif}(0,0.1)$.
  \begin{figure}[t!]
\centering
\scalebox{0.9}{
\includegraphics[ trim={0cm 8.2cm 0 8cm},clip,width=\linewidth]{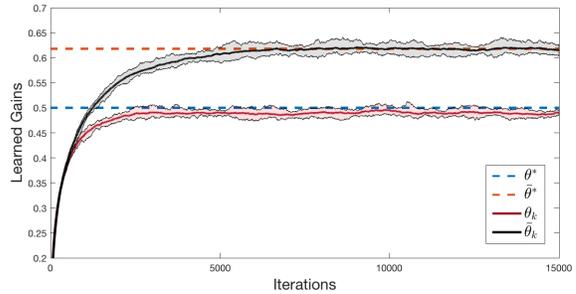}}
\caption{The convergence of the model-free policy gradient descent algorithm  in Example 2 with  the risk-neutral cost function.}\label{fig:PN2}
\end{figure}
The  learning trajectory  of  the model-free policy gradient descent  algorithm is depicted in Figure~\ref{fig:PN2} for $10$ random seeds.  The simulation was run on a 2.7 GHz Intel Core  i5 processor and took roughly $2$ hours.

\section{Implementation}\label{sec:implement}
In practice, agents can use different methods  to implement the  RL algorithms. Below, we mention three types. 

\begin{itemize}
\item \textbf{Team (common)  learner}: All  agents have access to a \emph{common exploration} noise, meaning that  the empirical gradient is  identical for all agents. This way, all agents run the same learning algorithm  with an identical solution, given that  the step sizes are identical for all~agents.  
\item \textbf{Single learner and multiple imitators}. This is when a single agent  learns the optimal  strategy while other agents act as imitators and are passive during the learning process. In particular,  one agent explores the system from its point of view and  others employ the updated (learned) strategy  at each iteration to decide their next actions. It is also possible to select the learner  randomly at each iteration in order to have  a fair implementation.  This type of implementation   is similar to the notion of person-by-person optimality, which is different from the global optimality, in general.  Its advantage over the above  team implementation  is that the single player may use an individualized  observation. For example,  in the natural policy gradient~\eqref{eq:NPGD}, the centralized information $\Sigma_{\theta(s)}$ in~\eqref{eq:risk_neutral_pa} may be replaced by the individualized information $\sum_{t=1}^\infty \Delta x^i_t (\Delta x^i_t)^\intercal$,  because $\Delta x^i_t$  has an identical distribution for all  players in sub-population~$s$.

\item \textbf{Many independent learners}: To avoid discrepancy between the agents during the learning process,   the above implementations allow for  only one common rule of learning. However, if the number of agents is very large,  all agents can independently learn the strategy,  because their explorations  are decoupled from one another. In such a case,  the trajectory of the deep state (the coupling term)  is independent of  i.i.d. exploration~noises.
\end{itemize}

\section{Conclusions}\label{sec:conclusion}
In this paper, we investigated the  convergence of model-based and model-free gradient descent and natural policy gradient descent algorithms in linear quadratic deep structured teams. The size of the parameter space of the proposed algorithms is independent of the number agents in each sub-population,  making the algorithms  applicable to large-scale problems. By using the notions of gradient domination and locally Lipschtiz continuity, we presented an analytical proof for the global convergence of the above algorithms. The theoretical findings were verified by some simulations.  The obtained results naturally extend to other variants of reinforcement learning methods such as  actor-critic.

\bibliographystyle{IEEEtran}
\bibliography{Jalal_Ref}

\end{document}